\documentclass[letterpaper, 10 pt, conference]{ieeeconf}  % Comment this line out if you need a4paper

\IEEEoverridecommandlockouts       % This command is only needed if 
                                                          % you want to use the \thanks command

\overrideIEEEmargins                                      % Needed to meet printer requirements.

\usepackage{macros}
\usepackage{mathtools}
\usepackage{float}
\DeclarePairedDelimiter{\ceil}{\lceil}{\rceil}

\title{\LARGE \bf
Generalized Deterministic Perturbations For Stochastic Gradient Search
}

\author{Chandramouli K.$^1$,
        Prabuchandran K.J.$^1$$^2$,
        D. Sai Koti Reddy$^3$, 
        and Shalabh Bhatnagar$^1$$^4$ 
\thanks{$^1$ Department of Computer Science and Automation, Indian Institute of Science (IISc)} %{chandramouli, prabuchandra, shalabh}@iisc. ac.in}
\thanks{$^2$ Supported by Amazon-IISc Postdoctoral fellowship}
\thanks{$^3$ IBM Research, Bangalore}%, saikotireddy@in.ibm.com}
\thanks{$^4$ Robert Bosch Centre for Cyber-Physical Systems, IISc}}

\begin{document}
\maketitle
\begin{abstract}
Stochastic optimization (SO) considers the problem of optimizing an objective function in the presence of noise. Most of the solution techniques in SO estimate gradients from the noise corrupted observations of the objective and adjust parameters of the objective along the direction of the estimated gradients to obtain locally optimal solutions. Two prominent algorithms in SO namely Random Direction Kiefer-Wolfowitz (RDKW) and Simultaneous Perturbation Stochastic Approximation (SPSA) obtain noisy gradient estimate by randomly perturbing all the parameters simultaneously. This forces the search direction to be random in these algorithms and causes them to suffer additional noise on top of the noise incurred from the samples of the objective. Owing to this additional noise, the idea of using deterministic perturbations instead of random perturbations for gradient estimation has also been studied. Two specific constructions of the deterministic perturbation sequence using lexicographical ordering and Hadamard matrices have been explored and encouraging results have been reported in the literature. In this paper, we characterize the class of deterministic perturbation sequences that can be utilized in the RDKW algorithm. This class expands the set of known deterministic perturbation sequences available in the literature. Using our characterization, we propose construction of a deterministic perturbation sequence that has the least cycle length among all deterministic perturbations. Through simulations we illustrate the performance gain of the proposed deterministic perturbation sequence in the RDKW algorithm over the Hadamard and the random perturbation counterparts. We also establish the convergence of the RDKW algorithm for the generalized class of deterministic perturbations. 
\end{abstract}

\section{Introduction}
Stochastic optimization (SO) problems frequently arise in engineering disciplines such as transportation systems, machine learning, service systems, manufacturing etc. Practical limitations, lack of model information and the large dimensionality of these problems prohibit analytic solutions to these problems. Simulation is often employed to evaluate the performance of the current parameters of the system. Simulating and evaluating the system's performance is generally expensive and one is typically constrained by a simulation budget. In such scenarios, owing to the simulation budget one aims to drive the system to optimal parameter settings using as few simulations as possible.

Under the SO framework, we have a system that gives noise-corrupted feedback of the performance for the currently set parameters, i.e., given the system parameter vector $\theta$, the feedback that is available is the noisy evaluation $h(\theta, \xi)$ of the performance $J(\theta)=\E_{\xi}[h(\theta, \xi)]$ where $\xi$ is the noise term inherent in the system and $J(\theta)$ denotes the expected performance of the system for the parameter $\theta$. The pictorial description of such a system is shown in Figure \ref{fig:so}. The objective in the SO problem then is to determine a parameter $\theta^*$ that gives the optimal expected performance of the system, i.e., 
\begin{align}
 \theta^* = \arg\min_{\theta \in \R^p} J(\theta). \label{eq:pb}
\end{align}

\tikzstyle{block} = [draw, fill=white, rectangle,
   minimum height=3em, minimum width=6em]
\tikzstyle{sum} = [draw, fill=white, circle, node distance=1cm]
\tikzstyle{input} = [coordinate]
\tikzstyle{output} = [coordinate]
\tikzstyle{pinstyle} = [pin edge={to-,thin,black}]
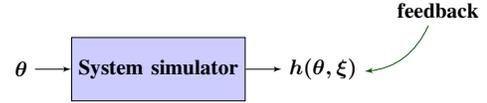
\begin{figure}[t]
  \centering
\scalebox{0.8}{\begin{tikzpicture}[auto, node distance=2cm,>=latex']
\node (theta) {$\boldsymbol{\theta}$};
\node [block, fill=blue!20,right=0.6cm of theta,align=center] (sample) {\makecell{\textbf{System simulator}}}; 
\node [right=0.6cm of sample] (end) {$\boldsymbol{{h(\theta, \xi)}}$};
\node [ above right= 0.6cm of end] (bias){\textbf{feedback}};
\draw [->] (theta) --  (sample);
\draw [->] (sample) -- (end);
\path [darkgreen,->] (bias) edge [bend left] (end);
\end{tikzpicture}}
\caption{Stochastic Optimization Model}
\label{fig:so}
\end{figure}

Analogous to solutions for deterministic optimization problems where the explicit analytic gradient of the objective function is used to adjust the parameters along the negative gradient directions, many of the solution approaches in SO mimic the familiar gradient descent algorithm. However, unlike the deterministic setting, the SO setting only has access to noise corrupted samples of the objective. Thus, in the SO setting, one essentially aims at estimating the gradient of the objective function using noisy cost samples. In the pioneering work by Kiefer and Wolfowitz \cite{kiefer1952}, the  gradient is estimated by approximating each of the partial derivatives using either a two-sided or a one-sided finite difference approximation (FDSA) algorithm. This algorithm requires $2p$ objective function evaluations (or simulations) per iteration for the two-sided gradient approximation scheme and $p+1$ simulations per iteration for the one-sided scheme (for a $p$-dimensional parameter problem, see \cite{bhatnagar-book}). As the number of simulations per iteration required for gradient estimation scales linearly with the dimension of the problem, FDSA algorithm is expensive to deploy under high-dimensional parameter settings. 

In \cite{kushcla}, Random Direction Kiefer-Wolfowitz (RDKW) algorithm that uses only two simulations per iteration for obtaining gradient estimates has been proposed. In the RDKW algorithm, all the parameters are randomly perturbed simultaneously using two parallel simulations and function evaluations at those perturbed parameters are used to obtain the gradient estimate. In the RDKW algorithm, the random perturbation vector as well as the random direction vector involved in estimating the gradient have been kept the same. For the choice of random direction (or perturbation), various distributions like spherical uniform distribution \cite{kushcla}, uniform distribution \cite{ermol1969method}, normal and Cauchy distribution \cite{styblinski1990experiments}, asymmetric Bernoulli \cite{prashanth2017adaptive} have been explored. The number of simulations required for estimating the gradients in the RDKW algorithm is significantly less compared to the FDSA algorithm and the algorithm is seen to perform empirically better than FDSA.	

In a seminal work \cite{spall}, the Simultaneous Perturbation Stochastic Approximation (SPSA) algorithm that uses two simulations similar to RDKW has been proposed. Unlike the RDKW algorithm, SPSA employs different choices for parameter perturbations and the random direction of movement, in particular, the random perturbation direction and the random direction of movement have been chosen to be inverses of each other. In \cite{spall}, symmetric Bernoulli distribution has been shown to be the best choice for random perturbations among all the distributions and the proposed SPSA scheme has been proven to perform asymptotically better compared to FDSA. In \cite{chin1997comparative}, a comprehensive comparative study of the stochastic optimization algorithms namely FDSA, RDKW and SPSA has been provided. Further, under a general third order cross derivative assumption on the loss function, RDKW with symmetric Bernoulli distribution has been shown to be the best choice for random directions. In \cite{theiler2006choice}, an example of a loss function that does not satisfy the third order cross derivative condition in \cite{chin1997comparative} has been constructed. For such a loss function, it has been shown that the optimal distribution choice for random directions need not be symmetric Bernoulli.

In \cite{kushcla} and \cite{spall1997one}, to further reduce simulation cost per iteration, extensions of the RDKW and SPSA algorithms that estimate the gradient with only one simulation or measurement of the objective have been considered. However, it is observed that the one-simulation gradient estimate has higher bias compared to the two-simulation gradient estimate. In \cite{sandilya1997deterministic} and \cite{wang1998deterministic}, deterministic conditions for the perturbation and noise sequences required to obtain almost sure convergence of the iterates have been discussed. In \cite{bhatfumarcwang}, to enhance the performance of one-sided SPSA scheme, deterministic perturbations based on lexicographical ordering and Hadamard matrices have been proposed. Further, the numerical results in \cite{bhatfumarcwang}, illustrate the benefit of Hadamard matrix based perturbation sequences as it has been shown to improve the performance of SPSA empirically for the case of one sided measurements. In \cite{xiong2002randomized}, a unified view of both RDKW and SPSA is presented and a binary deterministic perturbation sequence using orthogonal arrays \cite{hedayat1999orthogonal} for obtaining gradient estimate in both of the algorithms has been discussed. 

In this paper, we generalize the class of deterministic perturbation sequences that can be utilized in the RDKW algorithm. Based on this characterization, we provide a construction of a deterministic perturbation sequence using a specially chosen circulant matrix. We empirically study the performance of the constructed sequence against the afore mentioned Hadamard matrix based deterministic perturbations and the randomized perturbations. We expect with our generalization the study of rate of convergence for the RDKW algorithm based on deterministic perturbation sequences would be possible. We now summarize our contributions:
\begin{itemize}
\item We generalize the class of deterministic perturbation sequences that can be applied in the RDKW algorithm.
\item We provide a special construction of deterministic perturbation sequence with smaller cycle length compared to Hadamard perturbation sequence.  
\item We illustrate the performance gain of the proposed deterministic perturbations over the Hadamard matrix based perturbations as well as random perturbations.
\item We prove the convergence of the RDKW algorithm for the class of deterministic perturbations. 
\end{itemize}

% The rest of the paper is organized as follows: In section \ref{sec:algo}, we describe the various ingredients in algorithm \ref{alg:structure} and the construction
% of perturbations $d_n$. In section \ref{sec:convergenceresults}, we present the convergence results for 1RDSA-2C and 1RDSA-1C algorithms.
% In section~\ref{sec:expts}, we present the results from numerical experiments and finally, in section~\ref{sec:conclusions}, we provide the concluding remarks.

\section{Conditions on Deterministic Perturbations}
In this section, we describe the classical RDKW algorithm and motivate the 
necessary conditions that a deterministic perturbation sequence should satisfy for 
almost sure convergence of the iterates in the deterministic perturbation version of 
RDKW algorithm.

The standard RDKW algorithm iteratively updates the parameter vector along the direction of the negative estimated gradient, i.e.,
\begin{align}
\label{eq:grad-descent}
\theta_{n+1} = \theta_n - a_n \widehat{\nabla J}(\theta_n), 
\end{align}
where $a_n$ is the step-size that satisfies standard stochastic approximation conditions (see Assumption \textbf{A2} in section \ref{sec:convergenceresults}) and $\widehat{\nabla J}$ is the estimate of the gradient of the objective function $J$ at the current parameter. 

In the case of two-simulation RDKW algorithm, the gradient estimate at $\theta$ is obtained as
\begin{align}\label{gradEst1}
 \widehat{\nabla J}(\theta) = \frac{J(\theta + \delta d)-J(\theta-\delta d)}{2\delta}d,
\end{align}
where $d$ is the random perturbation direction chosen according to a specific probability distribution. The properties that the specific distribution on $d$ should satisfy can be obtained as explained below. The Taylor series expansion of $J(\theta \pm \delta d)$ around $\theta$ is given by
 \begin{equation} \label{eq:pmTaylor}
 J(\theta \pm \delta d)=
 J(\theta) \pm \delta d^T \nabla J(\theta)+o(\delta^2).
\end{equation}
From \eqref{eq:pmTaylor}, the error between the estimate and the true gradient at $\theta$ can be obtained as
\begin{align}\label{eq:Taylor}
&\frac{J(\theta +\delta  d )-J(\theta -\delta  d )}{2\delta }d 
 - \nabla J(\theta ) \nonumber \\
&= (d d ^T-I)\nabla J(\theta )+o(\delta ).
\end{align}
Note that the term $(dd ^T-I)\nabla J(\theta )$ constitutes the bias in the gradient estimate. For the error estimate in \eqref{eq:Taylor} to be negligible, we require 
\begin{equation}\label{eq:2sidedbias}
\mathbb{E}\Big{[}dd^T \Big{]}=I.
\end{equation}
Here, the expectation $\mathbb{E}[\cdot]$ is taken over the random perturbation distribution.

In the one-simulation version of the RDKW algorithm, the gradient estimate at $\theta$ is obtained as
\begin{align}\label{gradEst2}
 \widehat{\nabla J}(\theta) = \frac{J(\theta + \delta d)}{\delta}d.
\end{align}
By analogous Taylor series argument, we obtain the error between the estimate and the true gradient as
\begin{align}\label{eq:1sided}
&\frac{J(\theta +\delta  d )}{\delta }d 
- \nabla J(\theta ) \nonumber \\
&=\frac{J(\theta )}{\delta }d 
+(d d ^T-I)\nabla J(\theta )+O(\delta ).
\end{align}
From \eqref{eq:1sided}, we require the following to hold in addition to \eqref{eq:2sidedbias} in the case of random perturbations for the one simulation version of RDKW algorithm, i.e.,
\begin{equation}\label{eq:1sidedbias}
\mathbb{E}[d]=0. 
\end{equation}
For the random perturbations, $d \sim F$, $F$ is any distribution that satisfies \eqref{eq:2sidedbias} and \eqref{eq:1sidedbias}, the noise in the gradient estimates gets averaged asymptotically. An example distribution for $F$ is symmetric Bernoulli where each component of the perturbation vector is $\pm 1$ with equal probability.

From \eqref{eq:2sidedbias} and \eqref{eq:1sidedbias} clearly one is motivated to look for perturbations that satisfy similar properties. In what follows, the sequence of deterministic perturbations (that will be used in either \eqref{gradEst1} or \eqref{gradEst2}) will be denoted by $\{d_n\}_{n\geq 1}$ and we require the following two properties to hold for the perturbation sequence $d_n$ for the almost sure convergence of the iterates to a  local minima.
\begin{enumerate}[label=\textbf{P\arabic*.}]
  \item Let $D_n:=d_nd_n^T-I_{p \times p}.$ For any $s \in \mathbb{N}$ there exists a 
  $P \in \mathbb{N}$ such that $\sum\limits_{n=s+1}^{s+P}D_n=0$ and,
  \item  $\sum\limits_{n=s+1}^{s+P}d_n=0.$
\end{enumerate}
\begin{remark}\label{rem1}
The properties $\textbf{P1}$ and $\textbf{P2}$ are the deterministic analogues of \eqref{eq:2sidedbias} and \eqref{eq:1sidedbias}. For the properties $\textbf{P1}$ and $\textbf{P2}$ to hold, it is sufficient to determine a finite sequence $\{d_1,d_2,\dots,d_{P}\}$ such that $\sum_{n=1}^{P} d_n d_n^T = PI$ and $\sum_{n=1}^{P} d_n=0$ and for $n \geq P+1$, periodically cycle through this sequence, i.e., set $d_n=d_{n \% P + 1 }$. We will refer the length of the deterministic perturbation sequence $P$ as the cycle length.
\end{remark}

\section{Construction Of Deterministic Perturbations}\label{sec3}
In section \ref{detPerturb}, following Remark \ref{rem1}, we first characterize the finite sequences $\{d_1,d_2,\dots,d_{P}\}$ that satisfy properties $\textbf{P1}$ and $\textbf{P2}$ by providing a matrix equation whose solution gives the deterministic perturbations. In Section \ref{construct}, we then construct a specific sequence using a circulant matrix that has the least possible cycle length among all the deterministic perturbation sequences. Finally in section \ref{gradEstSec}, we completely describe the RDKW algorithm that uses the deterministic perturbation sequence constructed using the circulant matrix approach. 
\subsection{Matrix condition for Deterministic Perturbations}\label{detPerturb}
The properties \textbf{P1} and \textbf{P2} can be satisfied individually. For example, to satisfy property \textbf{P1}, let $P=p$ and $d_{n}=\sqrt{p} e_{n}, ~n \in \{1,\ldots,P\}$, the scaled canonical basis vectors, then $\sum_{n=1}^{P} d_{n}d_{n}^T = \sum_{n=1}^{p} pe_{n}e_{n}^T = pI$. To satisfy property \textbf{P2}, consider any set of linearly dependent vectors $\{v_0,\cdots,v_P\}$. Then there exists scalars $\alpha_1,\cdots, \alpha_P$ such that $\sum_{n=1}^{P} \alpha_n v_n=0$. Now for the choice $d_n=\alpha_nv_n$ the property \textbf{P2}, $\sum_{n=1}^{P} d_n=\sum_{n=1}^{P} \alpha_nv_n=0$ is trivially satisfied. A natural question would be to determine sequences $\{d_n\}_{1 \leq n \leq P}$ that satisfy both the properties simultaneously.

To address this problem, let us consider a $p \times P$ matrix $Y$ as follows:
$Y:=\left[\begin{array}{cccc}
\uparrow \ &\uparrow \ &\cdots \ &\uparrow \\
d_1 \  &d_2 \ &\cdots \ &d_{P} \\
\downarrow \ &\downarrow \ &\cdots \ & \downarrow \\
\end{array}\right].$
Let $u=[1,1,\cdots,1]^T$ be a $P \times 1$ dimension vector. The perturbations that satisfy properties \textbf{P1} and \textbf{P2} essentially solve the two matrix equations $Yu=0$ and $YY^{T}=PI$. These equations can be compactly written in a single matrix equation as 
\begin{align}\label{single}
XX^{T}=PI_{(p+1)\times(p+1)},
\end{align}
where $X=\left[\begin{array}{c}
u^{T}\\ Y \end{array}\right]$. Note that $Y_{p\times P}$ and $P$ are the unknowns here.

It can observed from \eqref{single} that $\frac{X}{\sqrt{P}}$ could be treated as a $p \times P$ submatrix of a $P \times P$ orthogonal matrix with the first row being $\frac{u^T}{\sqrt{P}}$, a $1 \times P$ vector. It has been shown in \cite{bhatfumarcwang} that columns of Hadamard matrices satisfy properties \textbf{P1} and \textbf{P2} simultaneously with $\bar{P}=2^{\log_2 \ceil{p+1}}$, i.e., $X$ is chosen as a $(p+1) \times 2^{\log_2 \ceil{p+1}}$ submatrix of the  
Hadamard matrix. It is not in general clear if the equation \eqref{single} can be solved for a smaller $P \leq \bar{P}$.
\begin{remark}\label{rem2}
We note that similar analysis for matrix condition for the construction of deterministic perturbations for SPSA estimates involves solving the following matrix system.
$AB=PI$,$Au=0$ and $A \circ B^T=vu^T$ where $A$ is $p\times P$, $B$ is $P\times p$, $u$ is $P \times 1$ vector of ones, $v$ is $p \times 1$ vector of ones and $\circ$ denotes the Hadamard product of the matrices $A$ and $B$. It is not clear how to solve for $P,$ $A$ and $B$ due to the presence of Hadamard product in this system. 
\end{remark}

\subsection{Specific Perturbation Sequence Construction}\label{construct}
In this section, our goal is to obtain a sequence with least cycle length. Using a simple matrix rank argument it can be shown that $P$ is at least $p+1$. Thus, in what follows, we give a construction of deterministic perturbation sequence with cycle length $P=p+1$.
We first write 
$$Y=\left[\begin{array}{ccc}
\uparrow  \ &\cdots \ &\uparrow \\
Z \ & \ &-ZU \\
\downarrow \  &\cdots \ & \downarrow \\
\end{array}\right]$$
where $Z$ is a $p\times p$ matrix and $U$ is any $p \times (P-p)$ matrix with columns that sum to 1. Clearly $Yu=0$ satisfies property \textbf{P2}.

To satisfy property \textbf{P1}, i.e., $YY^{T}=I$ is equivalent to
\begin{align}\label{mainEq}
ZZ^{T}+ZUU^{T}Z^{T}=Z(I+UU^{T})Z^{T}=PI.
\end{align}
Clearly construction of deterministic perturbations with smaller cycle length $P$ is equivalent to solving for
$Z$ with an appropriate choice of $U$.

The simplest choice of $U$ with column sums being 1 is $U=u$, a $p \times 1$ vector, thus $P=p+1$. Let $C=I+UU^{T}=I+uu^{T}$ ($p \times p$ dimensional matrix) 
\begin{equation}\label{eq:C}
C = \left[\begin{array}{cccc}
2 \ 1 \ 1 \cdots 1\\ 
1 \ 2 \ 1 \cdots 1 \\
\vdots \ \vdots \ \vdots \ \vdots\\
1 \ 1 \ 1 \cdots 2
\end{array}\right].
\end{equation}
Observe that $C$ is a positive definite circulant matrix. Hence $C^{-1/2}$ is well defined and the choice $Z=C^{-1/2}$ satisfies \eqref{mainEq} and solves the system $YY^{T}=I$ with  $P=p+1$, i.e., 
\begin{align}\label{yEqn}
Y=\sqrt{p+1}[C^{-1/2},-C^{-1/2}u].
\end{align}
The columns of $Y$ finally give us the deterministic perturbations. We note that in general the computation of $C^{-1/2}$ is $O(p^3)$ and can be very expensive for large $p$. However owing to the special structure of $C$,  using a Sherman-Morrison type result (see Lemma \ref{lemma: gen Sherman-Morrison}, Section \ref{sec:convergenceresults}), $C^{-1/2}$ can be computed in $O(p^2)$ time complexity.

\subsection{Gradient estimation}\label{gradEstSec}
In this section, we present the RDKW algorithms that use the deterministic perturbation sequence constructed above in two-simulation and one-simulation gradient estimates of the objective. We denote the corresponding algorithms by DSPKW-2C and DSPKW-1C respectively.  

\label{sec:algo}
\begin{algorithm}[H]
\begin{algorithmic}[1]
\State {\bf Input:}
\begin{itemize}
 \item $\theta_0 \in \mathbb{R}^p,$ initial parameter vector
 \item $\delta_n, n \geq 0,$ a sequence of sensitivity parameters to approximate gradient
 \item Matrix of perturbations $$Y=\sqrt{p+1}[C^{-1/2},-C^{-1/2}u],$$ 
 with $u=[1,1,\cdots,1]^T;$
 \item noisy measurements of cost objective $J$
 \item $a_n, n \geq 0,$ step-size sequence satisfying assumption \ref{stepsizes} (see section \ref{sec:convergenceresults})
 \item $n_{end}$, the total number of iterations determined by simulation budget
\end{itemize}
\State {\bf Output:} $\theta_{n_{\text{end}}}$, approximate local optimal solution
\For{$n = 1,2,\ldots n_{\text{end}}$}	
	\State Let $d_n$ be the mod$(n,p+1)^{\text{th}}$ column of $Y$. 
	\State Update the parameter as follows:
  \begin{equation*}
  \theta_{n+1}=\theta_n-a_n \widehat{\nabla J}(\theta_n)
  \end{equation*}
$\widehat{\nabla J}(\theta_n)$ is chosen according to either \eqref{eq:grad-twosided} 
or \eqref{eq:grad-onesided} for DSPKW-2C and DSPKW-1C respectively. 
\EndFor
\State {\bf Return} $\theta_{n_{\text{end}}}$
\end{algorithmic}
\caption{Basic structure of DSPKW.}
\label{alg:structure}
\end{algorithm}

Let $\delta_n, n\geq 0$ denote a sequence of diminishing positive real numbers satisfying assumption \ref{stepsizes} in section \ref{sec:convergenceresults}.
Let $y_{n}^{+}$, $y_{n}^{-}$ denote the noisy objective function evaluations at the perturbed parameters $\theta_n+\delta_n d_n$ and $\theta_n -\delta_n d_n$ respectively, i.e.,
$y_{n}^{+} = J(\theta_n+\delta_n d_n) + M_{n+1}^{+}$ and $y_{n}^{-} = J(\theta_n-\delta_n d_n) + M_{n+1}^{-}$. We assume the noise terms $M_{n}^{+}, M_{n}^{-}$ are martingale difference noise sequence, $\E\left[M_{n+1}^{+} | \F_n \right] = \E\left[ M_{n+1}^{-} | \F_n\right] = 0$ where $\F_n = \sigma(\theta_m, M^{+}_{m}, M^{-}_{m}, ~m\le n)$ is the information conditioned on the past parameter values and martingale difference terms.

The two-simulation and one-simulation estimates of the gradient $\nabla J(\theta_n)$ based on the observed noisy objective samples for the RDKW algorithm are respectively given by
\begin{align}
\label{eq:grad-twosided}
\widehat{\nabla J}(\theta_n)=
\left[\dfrac{(y_{n}^{+} - y_{n}^{-})d_n}{2\delta_n}\right],
\end{align}
\begin{align}
\label{eq:grad-onesided}
\widehat{\nabla J}(\theta_n)=
\left[\dfrac{(y_{n}^{+})d_n}{\delta_n}\right],
\end{align}
respectively. Observe that in the two-sided estimate \eqref{eq:grad-twosided} we use two function samples $y_{n}^{+}$ and $y_{n}^{-}$ 
and the estimate in \eqref{eq:grad-onesided} uses only one function sample $y_{n}^{+}$.

Now we briefly describe the DSPKW algorithm. Inputs to the DSPKW algorithm are randomly chosen initial point $\theta_0$,  diminishing sequences $\delta_n$ and $a_n$ satisfying assumption \ref{stepsizes} and the matrix of deterministic perturbations $Y$ chosen according to \eqref{yEqn}. In our algorithms, we iteratively choose the perturbations by cycling through columns of $Y$ with period $p+1$ and in steps 2-4, we update the parameters along the direction of estimated gradient according to \eqref{eq:grad-twosided} in the DSPKW-2C algorithm and according to \eqref{eq:grad-onesided} in the DSPKW-1C algorithm. Note the choice of gradient estimate (or the algorithm) is dictated by the simulation budget given to us. The algorithms terminate by returning the 
parameter $\theta_{n_{end}}$ at the end of $n_{end}$ iterations.

\section{Convergence Analysis}
\label{sec:convergenceresults}
In this section we first provide a few lemmas that assist in computing the proposed deterministic perturbation sequence (see \eqref{yEqn} in Section \ref{construct}). In the latter part of the section, we prove the 
almost sure convergence of the iterates for the class of deterministic perturbations characterized in Section \ref{detPerturb}.% (that satisfies properties $\textbf{P1}$ and $\textbf{P2}$). 

The following lemma is useful in obtaining the negative square root of $C$, i.e., $C^{-1/2}$ in a computationally efficient manner. Also note that it takes only $O(p^2)$ operations to compute $C^{-1/2}$
using the lemma and the circulant structure of $C^{-1/2}$. Note that the following lemma could also be utilized in an independent context for efficient computation.
\begin{lemma}
 \label{lemma: gen Sherman-Morrison}
Let $I$ be a $p \times p$ identity matrix and \\
$u=[1,1, \cdots 1]^{T}$
% $u = 
% \left[
%  \begin{array}{cccc}
%  1 \\ 1 \\  \vdots \\ 1 
% \end{array} \right]$ 
be a $p \times 1$ column vector of 1s, then
$$ (I+uu^T)^{-1/2}= I-\frac{uu^T}{p}+\frac{uu^T}{p\sqrt{(1+p)}}.$$
\end{lemma}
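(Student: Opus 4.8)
The plan is to exploit the fact that every matrix occurring here lives in the two-dimensional commutative algebra $\mathcal{A} = \{\alpha I + \beta uu^T : \alpha,\beta \in \mathbb{R}\}$. This set is closed under multiplication because $uu^T uu^T = (u^T u)\,uu^T = p\,uu^T$, and within $\mathcal{A}$ the product collapses to a pair of scalar rules, namely $(\alpha_1 I + \beta_1 uu^T)(\alpha_2 I + \beta_2 uu^T) = \alpha_1\alpha_2 I + (\alpha_1\beta_2 + \alpha_2\beta_1 + p\beta_1\beta_2)\,uu^T$. This reduces the entire matrix identity to elementary algebra in two scalars, so I would set up the argument so that all manipulations stay inside $\mathcal{A}$.

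First I would write the claimed right-hand side as $M = I + c\,uu^T$ with $c = \tfrac{1}{p}\bigl(\tfrac{1}{\sqrt{1+p}} - 1\bigr)$, and observe that it suffices to check two things: (i) $M^2 = (I+uu^T)^{-1}$, and (ii) $M$ is positive definite. Together these say that $M$ is the unique positive-definite square root of $C^{-1} = (I+uu^T)^{-1}$, i.e. $M = C^{-1/2}$, which is exactly the asserted formula.

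For step (i), I would first record the Sherman--Morrison inverse $(I+uu^T)^{-1} = I - \tfrac{1}{1+p}uu^T$, obtained by solving $b + 1 + pb = 0$ for the coefficient $b$ of $uu^T$. Applying the multiplication rule to $M^2$ gives $M^2 = I + (2c + pc^2)\,uu^T$, so the whole claim reduces to the single scalar identity $2c + pc^2 = -\tfrac{1}{1+p}$. Substituting $s := \sqrt{1+p}$, so that $p = s^2 - 1 = (s-1)(s+1)$, simplifies the coefficient to $c = -\tfrac{1}{s(s+1)}$, after which $2c + pc^2$ collapses cleanly to $-1/s^2 = -1/(1+p)$, as needed. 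This is a routine verification with no real difficulty once the $s$-substitution is in place.

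For step (ii), I would diagonalize $M$ by eigenspaces: $u$ is an eigenvector with eigenvalue $1 + cp = 1/\sqrt{1+p} > 0$, while every vector orthogonal to $u$ is fixed by $M$ (eigenvalue $1$), so $M \succ 0$. I expect no genuine obstacle in the proof; the only point that needs care is this positive-definiteness, since $M^2 = C^{-1}$ by itself does not single out $M$ as the \emph{principal} square root, and it is positive definiteness that rules out the other sign choices and makes the notation $C^{-1/2}$ unambiguous. As an alternative that sidesteps the verification route altogether, one may argue spectrally from the outset: $uu^T$ has eigenvalue $p$ on $\mathrm{span}(u)$ and $0$ on $u^\perp$, so $C^{-1/2}$ acts as $(1+p)^{-1/2}$ on $\mathrm{span}(u)$ and as $1$ on $u^\perp$; writing the two spectral projections as $\tfrac{uu^T}{p}$ and $I - \tfrac{uu^T}{p}$ then reproduces the stated formula immediately.
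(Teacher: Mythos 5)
Your proposal is correct, and its computational core---checking that the square of $M = I - \frac{uu^T}{p} + \frac{uu^T}{p\sqrt{1+p}}$ multiplies against $I+uu^T$ to give $I$---is exactly the paper's proof: the paper simply expands $(I+uu^T)M^2$ using $u^Tu = p$ and verifies it equals the identity. Where you go beyond the paper is your step (ii). The paper stops at the verification $M^2 = C^{-1}$, which by itself does not identify $M$ as $C^{-1/2}$: a symmetric positive definite matrix has many square roots (e.g.\ $-M$, or sign flips on eigenspaces), and the notation $C^{-1/2}$ refers to the principal, positive definite one. Your observation that $M$ has eigenvalue $1+cp = 1/\sqrt{1+p} > 0$ on $\mathrm{span}(u)$ and eigenvalue $1$ on $u^{\perp}$, hence $M \succ 0$, closes this small but genuine gap in the paper's own argument. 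Your closing spectral alternative is also a genuinely different and arguably cleaner route: rather than verifying a guessed formula, it derives the formula from scratch by applying $t \mapsto (1+t)^{-1/2}$ to the eigenvalues $p$ and $0$ of $uu^T$ and recombining with the spectral projections $\frac{uu^T}{p}$ and $I - \frac{uu^T}{p}$. That derivation explains where the formula comes from, which neither your verification in steps (i)--(ii) nor the paper's proof does, and it generalizes immediately to any function of $I + uu^T$.
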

\begin{proof}
It is enough to show that
$$(I+uu^T)\Bigg{[}I-\frac{uu^T}{p}+\frac{uu^T}{p\sqrt{(1+p)}}\Bigg{]}^2=I.$$
Using $\|u\|^2=u^Tu=p$ in the expansion of $\Big{[}I-\frac{uu^T}{p}+\frac{uu^T}{p\sqrt{(1+p)}}\Big{]}^2$
gives the result.
\end{proof}
Let $C$ be defined as in \eqref{eq:C} and $Y=\sqrt{p+1}[C^{-1/2},-C^{-1/2}u].$
Let the perturbations $d_n$ be the columns of $Y.$ 
\begin{lemma}
 The perturbations $d_n$ chosen as columns of Y satisfy properties $\textbf{P1}$ and $\textbf{P2}$.
\end{lemma}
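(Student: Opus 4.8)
The plan is to verify the two matrix identities that, by Remark~\ref{rem1}, are equivalent to properties \textbf{P1} and \textbf{P2} for the periodic extension of the columns of $Y$: namely $\sum_{n=1}^{P} d_n = 0$ (equivalently $Y\mathbf{1} = 0$, where $\mathbf{1}$ is the $P \times 1$ all-ones vector) and $\sum_{n=1}^{P} d_n d_n^T = PI_{p\times p}$ (equivalently $YY^T = PI$), with $P = p+1$. Both reduce to a direct computation using the block form $Y = \sqrt{p+1}\,[\,C^{-1/2},\,-C^{-1/2}u\,]$ together with the definition $C = I + uu^T$ from \eqref{eq:C}.

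First I would record the two structural facts the argument leans on. Since $C = I + uu^T$ is symmetric and positive definite (its eigenvalues are $1+p$ with eigenvector $u$, and $1$ with multiplicity $p-1$), the principal square root $C^{-1/2}$ is well-defined and symmetric, so $(C^{-1/2})^T = C^{-1/2}$ and $C^{-1/2}CC^{-1/2} = I$. These are the only properties of $C^{-1/2}$ needed; the explicit closed form of Lemma~\ref{lemma: gen Sherman-Morrison} is not required here.

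For \textbf{P2}, I would sum the columns of $Y$. The first $p$ columns are $\sqrt{p+1}$ times the columns of $C^{-1/2}$, whose sum is $\sqrt{p+1}\,C^{-1/2}u$; the final column equals $-\sqrt{p+1}\,C^{-1/2}u$. Adding gives $0$, so \textbf{P2} holds by construction. For \textbf{P1}, I would expand the block product
\[
YY^T = (p+1)\bigl(C^{-1/2}(C^{-1/2})^T + C^{-1/2}uu^T(C^{-1/2})^T\bigr).
\]
Using symmetry of $C^{-1/2}$, the bracket equals $C^{-1} + C^{-1/2}uu^TC^{-1/2} = C^{-1/2}(I + uu^T)C^{-1/2} = C^{-1/2}CC^{-1/2} = I$, whence $YY^T = (p+1)I = PI$.

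There is no real obstacle: the lemma is essentially a verification that the construction of Section~\ref{construct} does what it was designed to do, and the work is already implicit in \eqref{mainEq}--\eqref{yEqn}. The only points needing a little care are (i) keeping the all-ones vector $u$ of length $p$ (used inside $Y$) distinct from the all-ones vector $\mathbf{1}$ of length $P = p+1$ appearing in \textbf{P2}, and (ii) invoking symmetry of $C^{-1/2}$ so that the cross term $C^{-1/2}uu^TC^{-1/2}$ combines cleanly with $C^{-1}$ to reconstitute $C$.
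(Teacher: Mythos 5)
Your proposal is correct and takes essentially the same route as the paper: the paper's proof simply defers to the construction in Section~\ref{construct}, where $Yu=0$ and $Y Y^T = PI$ are built in by the choice $Z = C^{-1/2}$, and your argument just makes that verification explicit (column sums for \textbf{P2}, the block product $C^{-1/2}(I+uu^T)C^{-1/2} = I$ scaled by $p+1$ for \textbf{P1}). Your explicit computation also quietly fixes the scaling slip in \eqref{mainEq}, where the paper writes $YY^T = I$ but means $YY^T = PI$.
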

\begin{proof}
 It easily follows from the  discussion in section \ref{construct} on the construction of this specific perturbation sequence. 
\end{proof}

In what follows, we prove the almost sure convergence of the iterates in the DSPKW algorithm (Section \ref{gradEstSec}) under the following assumptions. Note that $\|.\|$ denotes the 2-norm. 
\begin{enumerate}[label= \textbf{A\arabic*.}]
 \item The map $J:\mathbb{R}^p \rightarrow \mathbb{R}$ is Lipschitz continuous and 
 is differentiable with bounded second order derivatives. Further, 
 the map $L:\mathbb{R}^p \rightarrow \mathbb{R}^p$ defined as 
 $L(\theta)=-\nabla J(\theta)$ is Lipschitz continuous.
 \item The step-size sequences $a_n, \delta_n >0, \forall n $  satisfy
 \begin{equation*}\label{stepsizes}
 a_n,\delta_n \rightarrow 0 , \sum_na_n=\infty,
 \sum_n \Big{(}\frac{a_n}{\delta_n}\Big{)}^2<\infty.
 \end{equation*}
 Further, $\frac{a_j}{a_n}\rightarrow 1$ as $n\rightarrow \infty$, for all
 $j \in \{n,n+1,n+2\cdots,n+M\}$ for any given $M>0$ and $b_n=\frac{a_n}{\delta_n}$ is 
 such that $\frac{b_j}{b_n}\rightarrow 1$ as $n\rightarrow \infty$, for all
 $j \in \{n,n+1,n+2,\cdots,n+M\}.$

 \item $\max_n \|d_n\|= K_{0}, \max_n \|D_n\|= K_{1}$. 
 
 \item The iterates $	\theta_n$ remain uniformly bounded almost surely, i.e.,
 $ \sup_n\|\theta_n\|<\infty, \text{ a.s.}$

 \item The ODE $\dot{\theta}(t)=-\nabla J(\theta(t))$ has a compact set 
 $G \subset \mathbb{R}^p$ as its set of asymptotically stable equilibria
 (i.e., the set of local minima of $J$ is compact).
 
 \item The sequences $(M_{n}^{+},\F_n),(M_{n}^{-},\F_n), n\geq0 $ form martingale difference sequences.
 Further, $(M_{n}^{+},M_{n}^{-},n\geq0)$ are square integrable random variables satisfying
 $$\E[\|M_{n+1}^{\pm}\|^2|\F_n]\leq K(1+\|\theta_n\|^2) \text{ a.s., } \forall n\geq0,$$
 for a given constant $K > 0.$
\end{enumerate}
\begin{remark}\label{rem3}
Assumptions \textbf{A1},  \textbf{A2} and \textbf{A5} are standard stochastic approximation conditions. Assumption \textbf{A3} trivially follows from Remark \ref{rem1}.  Assumption \textbf{A4} is the stability condition on the iterates and holds in many applications \cite{spall} (see the discussion in pp 40-41 of \cite{kushcla}).  This condition can also be enforced by projecting the iterates into a compact set, however, the iterates converge to a limiting set that contains all possible limit points (see pp.191 in \cite{kushcla}). Assumption \textbf{A6} gives the condition on the maximum strength of the martingale difference noise under which convergence of the iterates could be ensured and in many stochastic optimization settings this condition could be easily verified using Jensen's inequality and Lipschitz continuity of $\nabla J$ . 
\end{remark}

The following two lemmas aid in the proof of almost sure convergence of the iterates in the DSPKW algorithm.
\begin{lemma}
 Given any fixed integer $P>0$, $\|\theta_{m+k}-\theta_{m}\| \rightarrow 0$ $w.p.1,$ as
 $m \rightarrow \infty,$ for all $k \in \{1,\cdots, P\}.$
\end{lemma}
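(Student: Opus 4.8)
The plan is to write the displacement $\theta_{m+k}-\theta_m$ as a telescoping sum of the per-step increments and then to split it into a drift/bias part that vanishes termwise and a noise part that vanishes because an associated martingale converges. First I would expand
\[
\theta_{m+k}-\theta_m=-\sum_{j=m}^{m+k-1}a_j\,\widehat{\nabla J}(\theta_j)
\]
and substitute the gradient estimate. Using the Taylor expansion of $J(\theta_j\pm\delta_j d_j)$ about $\theta_j$ (legitimate by the bounded–second–derivative hypothesis of \textbf{A1}) together with $y_j^{\pm}=J(\theta_j\pm\delta_j d_j)+M_{j+1}^{\pm}$, I would write the two-simulation estimate as
\[
\widehat{\nabla J}(\theta_j)=(d_jd_j^T)\nabla J(\theta_j)+O(\delta_j)d_j+\frac{M_{j+1}^{+}-M_{j+1}^{-}}{2\delta_j}d_j,
\]
with the one-simulation estimate carrying the extra term $b_jJ(\theta_j)d_j$, where $b_j=a_j/\delta_j$. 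This decomposes $\theta_{m+k}-\theta_m$ into a drift/bias contribution and a noise contribution.

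Next I would dispose of the drift/bias contribution termwise. Since the window contains at most $P$ summands, it suffices to show that each of these increments tends to zero. The terms $a_j(d_jd_j^T)\nabla J(\theta_j)$ and $a_jO(\delta_j)d_j$ (and $a_jb_jJ(\theta_j)d_j$ in the one-sided case) are bounded by $a_j$ times an almost surely finite constant: $\|d_j\|\le K_0$ by \textbf{A3}, the iterates are a.s.\ bounded by \textbf{A4}, and hence $\nabla J(\theta_j)$ and $J(\theta_j)$ are bounded by \textbf{A1}. Here $a_j\to0$ and $b_j\to0$, the latter because $\sum_n b_n^2<\infty$ in \textbf{A2} forces $b_n\to0$. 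A finite sum of terms each tending to zero tends to zero.

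The main obstacle is the noise contribution $-\sum_{j=m}^{m+k-1}\frac{a_j}{2\delta_j}(M_{j+1}^{+}-M_{j+1}^{-})d_j$, which does not vanish termwise. For this I would introduce the $\R^p$-valued process
\[
N_n=\sum_{j=1}^{n-1}\frac{a_j}{2\delta_j}(M_{j+1}^{+}-M_{j+1}^{-})d_j,
\]
which is a martingale adapted to $\{\F_n\}$ since each summand has zero conditional mean by the martingale-difference property of \textbf{A6} (the factors $a_j,\delta_j,d_j$ being deterministic). Bounding its predictable quadratic variation via $\|d_j\|\le K_0$ (\textbf{A3}), the moment bound $\E[\|M_{j+1}^{\pm}\|^2\mid\F_j]\le K(1+\|\theta_j\|^2)$ (\textbf{A6}), and the a.s.\ boundedness of the iterates (\textbf{A4}) gives, on the stability event and for an a.s.-finite constant $C$,
\[
\sum_{j}\E\Big[\big\|\tfrac{a_j}{2\delta_j}(M_{j+1}^{+}-M_{j+1}^{-})d_j\big\|^2\,\big|\,\F_j\Big]\le C\sum_{j}\Big(\tfrac{a_j}{\delta_j}\Big)^2<\infty,
\]
where finiteness is exactly the summability $\sum_n b_n^2<\infty$ of \textbf{A2}.

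By the martingale convergence theorem, $N_n$ therefore converges almost surely to a finite limit, so its Cauchy property yields $N_{m+k}-N_m\to0$ a.s.\ as $m\to\infty$, uniformly over the finitely many $k\in\{1,\dots,P\}$. Combining this with the vanishing of the drift/bias part on the same probability-one event gives $\|\theta_{m+k}-\theta_m\|\to0$ w.p.1, which is the claim. The only delicate point is that the noise increments must be controlled globally (through convergence of $N_n$) rather than bounded term-by-term, since the martingale-difference noise itself need not decay along the sequence.
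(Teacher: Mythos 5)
Your proof is correct and follows essentially the same route as the paper's: telescope the update over the window of at most $P$ steps, handle the noise contribution by showing that $N_n=\sum_j \frac{a_j}{2\delta_j}(M_{j+1}^{+}-M_{j+1}^{-})d_j$ is a martingale whose quadratic variation is summable (via \textbf{A2}, \textbf{A3}, \textbf{A4}, \textbf{A6}) and hence converges almost surely, and show that the remaining drift terms vanish termwise. The only cosmetic difference is that you Taylor-expand the difference quotient into $(d_jd_j^T)\nabla J(\theta_j)$ and $O(\delta_j)$ pieces scaled by $a_j$, whereas the paper bounds $|J(\theta_j\pm\delta_j d_j)|$ directly via Lipschitz continuity of $J$ and boundedness of the iterates, obtaining a single bound of the form $\tilde{K}\,a_j/(2\delta_j)$; both arguments hinge on $a_n\rightarrow 0$ and $b_n=a_n/\delta_n\rightarrow 0$, so the conclusions coincide.
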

\begin{proof}
 Fix a $k \in \{1,\cdots,P \}.$ Now
 \begin{align*}
 \begin{split}
 \theta_{n+k} = \theta_n & -\sum_{j=n}^{n+k-1}a_j\Bigg{(}\frac{J(\theta_j+\delta_j d_j)-J(\theta_j-\delta_j d_j)}{2\delta_j}\Bigg{)}d_j \\ 
  &-\sum_{j=n}^{n+k-1}a_jM_{j+1},
 \end{split}
 \end{align*}
 where $M_{j+1}=\frac{(M_{j+1}^{+}-M_{j+1}^{-})d_{j}}{2\delta_j}$. Thus,
 \begin{align*}
 \begin{split}
 \|\theta_{n+k}-\theta_n\| &\leq \sum_{j=n}^{n+k-1}a_j\Bigg{|}\frac{J(\theta_j+\delta_{j} d_j)-J(\theta_j-\delta_{j} d_j)}{2\delta_j}\Bigg{|}\|d_j\|\\
 &+\sum_{j=n}^{n+k-1}a_j\|M_{j+1}\|.
\end{split}
\end{align*}
Now clearly,
$N_n=\sum\limits_{j=0}^{n-1}a_jM_{j+1}, n\geq1,$
forms a martingale sequence with respect to the filtration $\{\F_n\}$.
Further, from the assumption (A6) we have,
\begin{align*}
\sum_{m=0}^{n}\mathbb{E}[\|N_{m+1}-N_{m}\|^2|\mathcal{F}_{m}]& =\sum_{m=0}^{n}\mathbb{E}[a_{m}^2\|M_{m+1}\|^2|\mathcal{F}_{m}]\\
& \leq \sum_{m=0}^{n}a_{m}^2K(1+\|\theta_m\|^2).
\end{align*}
From the assumption (A4), the quadratic variation process of $N_n,n\geq0$ converges 
almost surely. Hence by the martingale convergence theorem, it follows that 
$N_n, n\geq0$ converges almost surely. Hence
$\|\sum\limits_{j=n}^{n+k-1}a_jM_{j+1}\|\rightarrow 0$ almost surely as $n\rightarrow \infty.$
Moreover
\begin{align*}
&\Big{\|}\Big{(}J(\theta_j+\delta_j d_j)-J(\theta_j-\delta_j d_j)\Big{)}d_j\Big{\|} \\
& \leq \Big{|}\Big{(}J(\theta_j+\delta_j d_j)-J(\theta_j-\delta_j d_j)\Big{)}\Big{|}\|d_j\|\\
& \leq K_{0} \Big{(}|J(\theta_j+\delta_j d_j)|+|J(\theta_j-\delta_j d_j)|\Big{)},
\end{align*}
since $\|d_j\|\leq K_{0}, \forall j \geq0.$
Note that
\begin{align*}
|J(\theta_j+\delta_j d_j)|-|J(0)| & \leq|J(\theta_j+\delta_j d_j)-J(0)| \\
& \leq \hat{B} \|\theta_j+\delta_j d_j\|,
\end{align*}
where $\hat{B}$ is the Lipschitz constant of the function $J.$ Hence,
\begin{equation*}
|J(\theta_j+\delta_j d_j)|\leq \tilde{B}(1+\|\theta_j+\delta_j d_j\|),
\end{equation*}
for $\tilde{B}=$max$(|J(0)|,\hat{B}).$ Similarly,
$$|J(\theta_j-\delta_j d_j)|\leq \tilde{B}(1+\|\theta_j-\delta_j d_j\|).$$
From assumption (A1), it follows that
$$\sup_j\Big{\|}\Big{(}J(\theta_j+\delta_j d_j)-J(\theta_j-\delta_j d_j)\Big{)}d_j\Big{\|}\leq \tilde{K}<\infty,$$
for some $\tilde{K}>0.$ Thus,
\newline
$\|\theta_{n+k}-\theta_n\| \leq \tilde{K}\sum\limits_{j=n}^{n+k-1}\frac{a_j}{2\delta_j}+\|\sum_{j=n}^{n+k-1}a_jM_{j+1}\|$
\newline
$\rightarrow 0 \text{ a.s. with } n \rightarrow \infty,$
proving the lemma.
\end{proof}
\begin{lemma}
$\text{ For any } m \geq0,$
$\Big{\|}\sum\limits_{n=m}^{m+P-1}\frac{a_n}{a_m}D_n\nabla J(\theta_n)\Big{\|} \text{ and }$
$\Big{\|}\sum\limits_{n=m}^{m+P-1}\frac{b_n}{b_m}d_nJ(\theta_n)\Big{\|}\rightarrow 0,$
$\text{almost surely, as } m \rightarrow \infty.$
\end{lemma}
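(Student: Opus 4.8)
The plan is to exploit the exact cancellation guaranteed by properties \textbf{P1} and \textbf{P2} over a full cycle. Since the perturbation sequence is periodic with period $P$ (Remark \ref{rem1}), any window of $P$ consecutive indices contains each of the $P$ distinct perturbations exactly once, so $\sum_{n=m}^{m+P-1} D_n = 0$ and $\sum_{n=m}^{m+P-1} d_n = 0$ for every $m$. The key observation is that if we could replace $\nabla J(\theta_n)$ by the common value $\nabla J(\theta_m)$ and the ratio $a_n/a_m$ by $1$, the first sum would collapse to $\big(\sum_{n=m}^{m+P-1} D_n\big)\nabla J(\theta_m)=0$; the entire argument therefore reduces to controlling the deviation from this idealized sum.

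For the first sum I would subtract the vanishing quantity $\sum_{n=m}^{m+P-1} D_n\nabla J(\theta_m)$ and rewrite the difference as $\sum_{n=m}^{m+P-1} D_n\big[\tfrac{a_n}{a_m}\nabla J(\theta_n)-\nabla J(\theta_m)\big]$. Splitting the bracket as $\tfrac{a_n}{a_m}\big(\nabla J(\theta_n)-\nabla J(\theta_m)\big)+\big(\tfrac{a_n}{a_m}-1\big)\nabla J(\theta_m)$ isolates two error sources. The first piece vanishes because $\nabla J$ is Lipschitz (\textbf{A1}) and $\|\theta_n-\theta_m\|\to 0$ almost surely by the preceding lemma, while $a_n/a_m$ stays bounded since $a_n/a_m\to 1$ (\textbf{A2}); the second vanishes because $a_n/a_m-1\to 0$ (\textbf{A2}) and $\nabla J$ is uniformly bounded, being the gradient of a Lipschitz $J$. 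Using $\|D_n\|\le K_{1}$ (\textbf{A3}) and the fact that the sum has only $P$ terms, the whole expression tends to $0$ almost surely.

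The second sum is handled identically, replacing $a_n/a_m$ by $b_n/b_m$, $D_n$ by $d_n$, $\nabla J$ by $J$, and using $\sum_{n=m}^{m+P-1} d_n=0$. Here one anchors at $J(\theta_m)$, writes the residual as $\sum_{n=m}^{m+P-1} d_n\big[\tfrac{b_n}{b_m}J(\theta_n)-J(\theta_m)\big]$, and splits it exactly as before: Lipschitz continuity of $J$ together with $\|\theta_n-\theta_m\|\to 0$ kills the first piece, $b_n/b_m\to 1$ (\textbf{A2}) together with the boundedness of $J(\theta_m)$ (finite since the iterates are bounded by \textbf{A4} and $J$ is Lipschitz) kills the second, and $\|d_n\|\le K_{0}$ (\textbf{A3}) over the finitely many terms controls the sum. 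I expect the only genuinely delicate point to be the interplay of the two simultaneous limits, namely the step-size ratios converging to $1$ and the iterates over a window converging together, both of which must hold on a common almost-sure event; this is exactly what \textbf{A2} and the previous lemma supply, so the remaining bookkeeping is routine once the cancellation at $\theta_m$ is set up.
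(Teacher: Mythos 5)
Your proposal is correct and follows essentially the same route as the paper: both exploit the exact cancellation $\sum_{n=m}^{m+P-1}D_n=0$ and $\sum_{n=m}^{m+P-1}d_n=0$ over a cycle to anchor the sum at $\theta_m$ (your subtraction of the zero sum $\sum_n D_n\nabla J(\theta_m)$ is algebraically identical to the paper's substitution $D_m=-\sum_{n=m+1}^{m+P-1}D_n$), then split the residual into a step-size-ratio error controlled by \textbf{A2} and an increment error controlled by \textbf{A1} and Lemma 3, bounding everything with $K_0,K_1$ from \textbf{A3}. The only cosmetic difference is which factor carries the ratio in the split of the bracket, which changes nothing of substance.
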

\begin{proof}
 From Lemma 3, it can be seen that
  $\|\theta_{m+s}-\theta_{m}\|\rightarrow 0$ as $m\rightarrow \infty,$
 for all $s=1,\cdots,P.$ Also, from assumption (A1), we have
 $\|\nabla J(\theta_{m+s})-\nabla J(\theta_{m})\|\rightarrow 0$ as $m\rightarrow \infty,$
 for all $s=1,\cdots,P.$ Now from Lemma 2, $\sum\limits_{n=m}^{m+P-1}D_n=0$ $\forall m\geq0.$
 Hence $D_m=-\sum\limits_{n=m+1}^{m+P-1}D_n.$ Consider first 
 \begin{align*}
  &\Big{\|}\sum_{n=m}^{m+P-1}\frac{a_n}{a_m}D_n\nabla J(\theta_n)\Big{\|}\\
  & = \Big{\|}\sum_{n=m+1}^{m+P-1}\frac{a_n}{a_m}D_n\nabla J(\theta_n)
  +D_{m}\nabla J(\theta_{m})\Big{\|}\\
  &=\Big{\|}\sum_{n=m+1}^{m+P-1}\frac{a_n}{a_m}D_n \nabla J(\theta_n)
     -\sum_{n=m+1}^{m+P-1}D_n\nabla J(\theta_{m})\Big{\|}\\
  &=\Big{\|}\sum_{n=m+1}^{m+P-1}D_n\Big{(}\frac{a_n}{a_m}\nabla J(\theta_n)
  -\nabla J(\theta_{m})\Big{)}\Big{\|}\\
  &\leq\sum_{n=m+1}^{m+P-1}{\|}D_n{\|}\Big{\|}\Big{(}\frac{a_n}{a_m}\nabla J(\theta_n)
  -\nabla J(\theta_{m})\Big{)}\Big{\|}\\
  &\leq K_{1}\sum_{n=m+1}^{m+P-1}\Big{\|}\Big{(}\frac{a_n}{a_m}-1\Big{)}\nabla J(\theta_n)\Big{\|}
  +\Big{\|}\nabla J(\theta_n)-\nabla J(\theta_{m})\Big{\|}
 \end{align*}
 $\rightarrow 0 \text{ a.s. with } n \rightarrow \infty,$ from assumptions (A1) and (A2).
 Now observe that $\|J(\theta_{m+k})-J(\theta_{m})\|\rightarrow 0$ as $m\rightarrow \infty,$
 for all $k \in \{1,\cdots,P\}$ as a consequence of (A1)
 and Lemma 3. Moreover from $d_m=-\sum\limits_{n=m+1}^{m+P-1}d_n$ we have
\begin{align*}
  & \Big{\|}\sum_{n=m}^{m+P-1}\frac{b_n}{b_m}d_n J(\theta_n)\Big{\|}\\
  & =\Big{\|}\sum_{n=m+1}^{m+P-1}\frac{b_n}{b_m}d_n J(\theta_n)+d_m J(\theta_{m})\Big{\|}\\
  & =\Big{\|}\sum_{n=m+1}^{m+P-1}\frac{b_n}{b_m}d_n J(\theta_n)-\sum_{n=m+1}^{m+P-1}d_n J(\theta_{m})\Big{\|}\\
  & =\Big{\|}\sum_{n=m+1}^{m+P-1}d_n\Big{(}\frac{b_n}{b_m}J(\theta_n)-J(\theta_{m})\Big{)}\Big{\|}\\
  & \leq \sum_{n=m+1}^{m+P-1}\|d_n\|\Big{\|}\Big{(}\frac{b_n}{b_m}J(\theta_n)-J(\theta_{m})\Big{)}\Big{\|}\\
  &\leq K_{0} \sum_{n=m+1}^{m+P-1}\Big{\|}\Big{(}\frac{b_n}{b_m}-1\Big{)} J(\theta_n)\Big{\|}
  +\Big{\|}\Big{(} J(\theta_n)- J(\theta_{m})\Big{)}\Big{\|}
\end{align*}
The claim now follows as a consequence of assumptions (A1) and (A2).
\end{proof}
Finally, using the following theorems, we conclude the analysis by proving the almost sure convergence of the iterates to the set of local minima $G$ of the function $J.$
\begin{theorem}
 $\theta_n, n\geq0$ obtained from DSPKW-2C satisfy $\theta_n \rightarrow G$
 almost surely.
\end{theorem}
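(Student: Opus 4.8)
The plan is to establish convergence through the ordinary differential equation (ODE) method for stochastic approximation: I would show that the linearly interpolated iterates asymptotically track the flow of $\dot\theta(t)=-\nabla J(\theta(t))$ and then appeal to the stability hypothesis \textbf{A5}. First I would insert the second-order Taylor expansion of $J(\theta_n\pm\delta_n d_n)$ about $\theta_n$ into the two-sided estimate \eqref{eq:grad-twosided}, rewriting the recursion in the canonical form
\begin{align*}
\theta_{n+1}=\theta_n-a_n\nabla J(\theta_n)-a_n D_n\nabla J(\theta_n)-a_n M_{n+1}-a_n\varepsilon_n,
\end{align*}
where $D_n=d_nd_n^T-I$ is the bias operator of \textbf{P1}, the term $M_{n+1}=\frac{(M_{n+1}^{+}-M_{n+1}^{-})d_n}{2\delta_n}$ is the martingale-difference noise, and $\varepsilon_n$ is the Taylor remainder, which vanishes as $\delta_n\to0$ by \textbf{A1} and $\|d_n\|\le K_0$ (from \textbf{A3}). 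The principal feature distinguishing the deterministic setting is that, unlike the random-perturbation case where the bias is zero in expectation at each step, here $D_n\nabla J(\theta_n)$ is $O(1)$ and does \emph{not} vanish per iteration; it only averages to zero over each cycle of length $P=p+1$.

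Next I would dispose of the three perturbing terms. The martingale term is handled exactly as in the proof of Lemma 3: the partial sums $N_n=\sum_{j<n}a_jM_{j+1}$ form an $\F_n$-martingale whose quadratic variation is summable by \textbf{A6} together with $\sum_n(a_n/\delta_n)^2<\infty$ from \textbf{A2}, so $N_n$ converges a.s.\ and its increments over any bounded continuous-time window vanish. The remainder $\varepsilon_n$ contributes negligibly because $\sup_{k\ge n}\|\varepsilon_k\|\to0$ while the step masses $\sum a_k$ over a bounded horizon stay bounded. The crux is the bias: grouping iterations into consecutive blocks of length $P$, the slow-variation estimate of Lemma 3 gives $\|\theta_{m+k}-\theta_m\|\to0$ and $\|\nabla J(\theta_{m+k})-\nabla J(\theta_m)\|\to0$, while the step-size regularity $a_j/a_m\to1$ from \textbf{A2} combined with $\sum_{n=m}^{m+P-1}D_n=0$ (property \textbf{P1}, via Lemma 2) yields, through Lemma 4, that $\big\|\sum_{n=m}^{m+P-1}\frac{a_n}{a_m}D_n\nabla J(\theta_n)\big\|\to0$ as $m\to\infty$. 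Thus the accumulated bias over each cycle is asymptotically negligible relative to the step size of that cycle.

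With these estimates in hand, I would assemble the standard Kushner--Clark (or Borkar) tracking argument: writing $t_n=\sum_{k<n}a_k$ and $m(t)=\max\{k:t_k\le t\}$ and interpolating the iterates, the cumulative error $\sum_{k=n}^{m(t_n+T)}a_k\big(D_k\nabla J(\theta_k)+M_{k+1}+\varepsilon_k\big)$ over any fixed horizon $T$ converges to zero a.s.\ along the cycle decomposition, so the interpolated trajectory is an asymptotic pseudotrajectory of $\dot\theta=-\nabla J(\theta)$. Invoking \textbf{A4} to keep the iterates in a compact set almost surely and \textbf{A5} to identify $G$ as the compact set of asymptotically stable equilibria of this ODE, I would conclude by the ODE method that $\theta_n\to G$ a.s.

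The hard part will be the bookkeeping that aligns the fixed cycle length $P$ with the continuous-time interpolation: over a bounded window $[t_n,t_n+T]$ the number of iterations is large but not generally a multiple of $P$, so one must control the boundary block straddling the end of the window and show that summing the per-cycle estimates of Lemma 4 (each of order $o(a_m)$) against the step-size series produces a vanishing total contribution. Making this summation-of-cycle-errors rigorous — rather than merely observing that a single cycle averages out — is where the real work lies.
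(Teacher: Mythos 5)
Your proposal follows essentially the same route as the paper's proof: Taylor-expand the two-sided estimate, decompose the recursion into gradient, bias, remainder, and martingale terms, use the cycle structure (properties \textbf{P1}/\textbf{P2} via Lemmas 2--4) to show the accumulated bias $\sum_{l=n}^{n+P-1}\frac{a_l}{a_n}D_l\nabla J(\theta_l)$ is $o(1)$, kill the noise by quadratic-variation summability, and conclude via the ODE method with \textbf{A4} and \textbf{A5}. The interpolation/boundary-block bookkeeping you flag as the hard part is precisely the step the paper compresses into declaring the recursion ``asymptotically analogous to'' the standard one before citing Theorem 2, Chapter 2 of Borkar, so your account is, if anything, more candid about where the remaining work lies.
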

\begin{proof}
 Note that
 \begin{equation*}
  \theta_{n+P} = \theta_n-\sum\limits_{l=n}^{n+P-1}a_l\Big{[}\frac{J(\theta_l+\delta_l d_l)-J(\theta_l-\delta_l d_l)}{2\delta_l}d_l+M_{l+1}\Big{]}.
 \end{equation*}
It follows that
 \begin{align*}
  & \theta_{n+P} = \theta_n-\sum_{l=n}^{n+P-1}a_l\nabla J(\theta_l) -\sum_{l=n}^{n+P-1}a_l o(\delta_l) \\
  & -\sum_{l=n}^{n+P-1}a_l(d_ld_l^T-I)\nabla J(\theta_l)-\sum_{l=n}^{n+P-1}a_lM_{l+1}.
  \end{align*}
Now the fourth term on the RHS above can be written as
$$a_n\sum_{l=n}^{n+P-1}\frac{a_l}{a_n}D_{l}\nabla J(\theta_l)=a_n\xi_{n},$$
where $\xi_{n}=o(1)$ from Lemma 4.
Thus, the algorithm is asymptotically analogous to
$$\theta_{n+1}=\theta_n-a_n(\nabla J(\theta_n)+o(\delta)+M_{n+1}).$$
Hence, from Theorem 2 in chapter 2 of \cite{borkar2008stochastic}, it follows that $\theta_n, n\geq0$ converge to a local minima of the function $J.$
\end{proof}
\begin{theorem}
  $\theta_n, n\geq0$ obtained from DSPKW-1C satisfy $\theta_n \rightarrow G$
 almost surely.
\end{theorem}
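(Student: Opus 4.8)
The plan is to follow the template of the proof of Theorem 1, the only essential difference being the extra first-order bias term $\tfrac{J(\theta_n)}{\delta_n}d_n$ that the one-simulation estimate \eqref{eq:grad-onesided} carries, as already recorded in \eqref{eq:1sided}. First I would unroll one full cycle of the recursion and substitute $y_l^+ = J(\theta_l + \delta_l d_l) + M_{l+1}^+$, so that after the Taylor expansion \eqref{eq:pmTaylor} and the decomposition $d_l d_l^T = I + D_l$,
\begin{align*}
\theta_{n+P} = \theta_n &- \sum_{l=n}^{n+P-1} a_l \nabla J(\theta_l) - \sum_{l=n}^{n+P-1} a_l \frac{J(\theta_l)}{\delta_l}d_l \\
&- \sum_{l=n}^{n+P-1} a_l D_l \nabla J(\theta_l) - \sum_{l=n}^{n+P-1} a_l O(\delta_l) - \sum_{l=n}^{n+P-1} a_l \frac{M_{l+1}^+ d_l}{\delta_l}.
\end{align*}

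Next I would dispose of each residual term. The $D_l$ bias is handled exactly as in Theorem 1: by the first assertion of Lemma 4 it equals $a_n\xi_n$ with $\xi_n = o(1)$, and the $O(\delta)$ term vanishes since $\delta_l \to 0$. The genuinely new term is $\sum_l a_l \tfrac{J(\theta_l)}{\delta_l}d_l = b_n \sum_l \tfrac{b_l}{b_n}J(\theta_l)d_l$ with $b_l = a_l/\delta_l$; by the second assertion of Lemma 4 the inner sum is $o(1)$, and since (A2) forces $b_n \to 0$, the whole term tends to zero. This is precisely where property \textbf{P2} (encoded in the second claim of Lemma 4 through $\sum d_l = 0$ over a cycle) is indispensable, whereas in the two-simulation case \textbf{P2} was not needed.

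For the noise, the one-sided martingale difference is now $\tfrac{M_{l+1}^+ d_l}{\delta_l}$, so $N_n = \sum_{j} b_j M_{j+1}^+ d_j$ is a martingale whose predictable quadratic variation is bounded by $K_0^2 K \sum_j b_j^2(1 + \|\theta_j\|^2)$ using (A3) and (A6); this is finite almost surely by (A4) and the summability $\sum_j b_j^2 < \infty$ from (A2). Hence $N_n$ converges almost surely and the cyclic partial sums of the noise tend to zero, exactly as in Lemma 3 but with $b_j$ replacing $a_j/(2\delta_j)$.

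Assembling these, the cycle recursion reduces to $\theta_{n+P} = \theta_n - \sum_l a_l \nabla J(\theta_l) + a_n\,o(1)$, so on the slow timescale the iterates track the ODE $\dot\theta(t) = -\nabla J(\theta(t))$, and the conclusion $\theta_n \to G$ follows from Theorem 2 in Chapter 2 of \cite{borkar2008stochastic} together with (A4) and (A5), just as for DSPKW-2C. The main obstacle I anticipate is the control of the amplified quantities $J(\theta_l)/\delta_l$ and $M_{l+1}^+/\delta_l$, both of which blow up pointwise as $\delta_l \to 0$ and become negligible only after cycle-averaging; making the first rigorous — in particular ensuring that the averaged extra bias is negligible relative to the gradient drift — rests entirely on the second part of Lemma 4, while the second relies on the condition $\sum_n (a_n/\delta_n)^2 < \infty$.
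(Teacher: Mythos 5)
Your proposal is correct and takes essentially the same route as the paper's own proof: the identical cycle-wise decomposition of $\theta_{n+P}-\theta_n$ isolating the extra one-sided bias $\sum_{l} a_l J(\theta_l)d_l/\delta_l$, the same appeal to both assertions of Lemma 4 (with property \textbf{P2} entering only here, through the second assertion, exactly as you note), and the same conclusion via the martingale argument and Theorem 2 of Chapter 2 of Borkar. Your explicit treatment of the noise term $b_j M_{j+1}^{+}d_j$ and of $b_n \to 0$ merely spells out steps the paper compresses into ``the rest follows as in Theorem 5.''
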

\begin{proof}
Note that
 \begin{align*}
 \theta_{n+P} = \theta_n-\sum_{l=n}^{n+P-1}a_l\Big{(}\frac{J(\theta_l+\delta_l d_l)}{2\delta_l}\Big{)}d_l-\sum_{l=n}^{n+P-1}a_lM_{l+1}.
 \end{align*} 
It follows that
 \begin{align*}
  \begin{split}
  & \theta_{n+P} = \theta_n- \sum_{l=n}^{n+P-1}a_l\nabla J(\theta_l) 
  -\sum_{l=n}^{n+P-1}a_l\frac{J(\theta_l)}{\delta_l}d_l\\
  &-\sum_{l=n}^{n+P-1}a_l(d_ld_l^T-I)\nabla J(\theta_l)-\sum_{l=n}^{n+P-1}a_lO(\delta_l)\\
  &-\sum_{l=n}^{n+P-1}a_lM_{l+1}.
 \end{split}
 \end{align*}
 Now we observe that the third term on the RHS above is
 \begin{align*}
 & \sum_{l=n}^{n+P-1}a_l\frac{J(\theta_l)}{\delta_l}d_l= \sum_{l=n}^{n+P-1} b_{l}J(\theta_l)d_l\\
 & = b_n\sum_{l=n}^{n+P-1}\frac{b_l}{b_n}\frac{J(\theta_l)}{\delta_l}d_l=b_n\xi^{1}_{n},
 \end{align*}
 where $\xi^{1}_{n}=o(1)$ by Lemma 4. Similarly
 $$\sum_{l=n}^{n+P-1}a_l(d_ld_l^T-I)\nabla J(\theta_l)=a_n\xi^{2}_{n},$$
 with $\xi^{2}_{n}=o(1)$ by Lemma 4.
 The rest follows as in Theorem 5.
\end{proof}

\section{Simulation Experiments}
\label{sec:expts}
In this section, we compare the numerical performance of our DSPKW-2C algorithm against the RDKW algorithm that uses random Bernoulli perturbations and another variant of the RDKW algorithm that uses Hadamard matrix based deterministic perturbations.  We refer them by the acronyms RDKW-2R  and RDKW-2H respectively.  In a similar manner, we also compare DSPKW-1C algorithm against the one-simulation variants RDKW-1R and RDKW-1H. Note that 2 or 1 in the acronyms of these algorithms denote the number of simulations utilized per iteration.\footnote{The implementation is available at \url{https://github.com/cs1070166/1RDSA-2Cand1RDSA-1C/}}

\subsection{Experimental setup}
For the empirical performance evaluation, we consider the following two loss functions:
\paragraph{Quadratic loss}
\begin{align}\label{eq:quadratic}
J(\theta) = \theta\tr A \theta + b\tr \theta. 
\end{align} 
\paragraph{Fourth-order loss}
\begin{align} \label{eq:4thorder}
J(\theta) = \theta \tr A\tr A\theta+0.1 \sum_{j=1}^N (A\theta)^3_j+0.01 \sum_{j=1}^N (A\theta)^4_j.
 \end{align} 
In the loss functions considered above, we set the dimension $p=10$. We choose $A$ such that $pA$ is an upper triangular matrix with each nonzero entry equal to one and $b$ is a $p$-dimensional vector of ones. In our experiments, we follow the same noise assumptions considered in \cite{spall_adaptive}, i.e., for any $\theta$, the additive noise in the objective is given by $[\theta \tr, 1]z$ where $z \sim \N(0,\sigma^2 I_{p+1 \times p+1})$. In all algorithms, we set the step-size schedule as $\delta_n = c/(n+1)^{\gamma}$ and $a_n = 1/(n+B+1)^{\alpha}$ with $\alpha=0.602$ and $\gamma=0.101$. Note that the chosen values for $\alpha$ and $\gamma$ have demonstrated good finite-sample performance empirically, while satisfying the theoretical requirements needed for asymptotic 
convergence (see \cite{spall_adaptive}.  We set the same initial point $\theta_0$ for all the algorithms.

We consider two settings in our experiments. In the first noise-free setting, we do not add any noise to the objective function evaluations and in the second setting, we corrupt the function evaluations by adding noise (with variance parameter $\sigma=0.01$ as described above). We evaluate the performance of these algorithms based on Normalized Mean Square Error (NMSE) metric.  NMSE is defined as the ratio $\l \theta_{n_\text{end}} - \theta^* \r^2 / \l \theta_0 - \theta^*\r^2$, where $\theta_{n_\text{end}}$ is the parameter returned by the algorithm. 

%%%%%%%%%%%%%%%%%%%%%%%%% 2 simulation NMSE results for quadratic
\begin{table}
\centering
\scalebox{0.96}{
\begin{tabular}{|c|c|}
\toprule
\rowcolor{gray!20}
\multicolumn{2}{||c|}{\multirow{2}{*}{\textbf{Noise parameter $\sigma=0$}}}\\[1em]
\midrule
\multirow{1}{*}{ \textbf{Method}} & \textbf{NMSE} \\
\midrule

\textbf{RDKW-2R} &$5.755 \times 10^{-3} \pm 2.460 \times 10^{-3}$ \\
&\\
\textbf{RDKW-2H} &$1.601 \times 10^{-5} \pm 2.724 \times 10^{-20}$ \\ 
&\\
\textbf{DSPKW-2C} &\bm{$2.474 \times 10^{-8} \pm 1.995 \times 10^{-23}$}\\
 \bottomrule

\rowcolor{gray!20}
\multicolumn{2}{||c|}{\multirow{2}{*}{\textbf{Noise parameter $\sigma=0.01$}}}\\[1em]
\midrule
\multirow{1}{*}{ \textbf{Method}} & \textbf{NMSE} \\
\midrule

\textbf{RDKW-2R} &$5.762 \times 10^{-3} \pm 2.473 \times 10^{-3}$ \\
&\\
\textbf{RDKW-2H} &$4.012 \times 10^{-5} \pm 1.654 \times 10^{-5}$\\ 
&\\
\textbf{DSPKW-2C} &\bm{$2.188 \times 10^{-5} \pm 9.908 \times 10^{-6}$}\\

 \bottomrule
\end{tabular}
}
 \caption{NMSE values of two-simulation methods for the quadratic objective
 \eqref{eq:quadratic} without and with noise for 2000 simulations: standard deviation 
 of $100$ replications shown after $\pm$ symbol}
\label{tab:NMSE-quadratic}
\end{table}

%%%%%%%%%%%%%%%%%%%%% 2 simulation method results for fourth order
\begin{table}
\centering
\scalebox{0.96}{
\begin{tabular}{|c|c|}
\toprule
\rowcolor{gray!20}
\multicolumn{2}{||c|}{\multirow{2}{*}{\textbf{Noise parameter $\sigma=0$}}}\\[1em]

\midrule
\multirow{1}{*}{ \textbf{Method}} & \textbf{NMSE} \\
\midrule

\textbf{RDKW-2R} &$2.747 \times 10^{-2} \pm 1.413 \times 10^{-2}$ \\
&\\
\textbf{RDKW-2H} &$3.901 \times 10^{-3} \pm 4.359 \times 10^{-18}$ \\ 
&\\
\textbf{DSPKW-2C} &\bm{$3.535 \times 10^{-3} \pm 1.743 \times 10^{-18}$}\\
 \bottomrule

\rowcolor{gray!20}
\multicolumn{2}{||c|}{\multirow{2}{*}{\textbf{Noise parameter $\sigma=0.01$}}}\\[1em]
\midrule
\multirow{1}{*}{ \textbf{Method}} & \textbf{NMSE} \\
\midrule

\textbf{RDKW-2R} &$2.762 \times 10^{-2} \pm 1.415 \times 10^{-2}$ \\
&\\
\textbf{RDKW-2H} &$3.958 \times 10^{-3} \pm 4.227 \times 10^{-4}$\\ 
&\\
\textbf{DSPKW-2C}& \bm{$3.598 \times 10^{-3} \pm 4.158 \times 10^{-4}$}\\

 \bottomrule
\end{tabular}
}
 \caption{NMSE values of two-simulation methods for the fourth order
 objective \eqref{eq:4thorder} without and with noise for 10000 simulations: standard deviation 
 of $100$ replications shown after $\pm$ symbol}
\label{tab:NMSE-fourthorder}
\end{table}

%%%%%%%%%%%%%%%%%%%%%%%%%% 1 simulation method results for quadratic

\begin{table}
\centering
\scalebox{0.96}{
\begin{tabular}{|c|c|}
\toprule
\rowcolor{gray!20}
\multicolumn{2}{||c|}{\multirow{2}{*}{\textbf{Noise parameter $\sigma=0$}}}\\[1em]

\midrule
\multirow{1}{*}{ \textbf{Method}} & \textbf{NMSE} \\
\midrule

\textbf{RDKW-1R} &$8.584 \times 10^{-2} \pm 3.681 \times 10^{-2}$ \\
&\\
\textbf{RDKW-1H} &$2.770 \times 10^{-2} \pm 3.836 \times 10^{-17}$ \\ 
&\\
\textbf{DSPKW-1C} &\bm{$8.225 \times 10^{-3} \pm 1.569 \times 10^{-17}$}\\
 \bottomrule

 \rowcolor{gray!20}
\multicolumn{2}{||c|}{\multirow{2}{*}{\textbf{Noise parameter $\sigma=0.01$}}}\\[1em]
\midrule
\multirow{1}{*}{ \textbf{Method}} & \textbf{NMSE} \\
\midrule

\textbf{RDKW-1R} &$8.582 \times 10^{-2} \pm 3.691 \times 10^{-2}$ \\
&\\
\textbf{RDKW-1H} &$2.774 \times 10^{-2} \pm 2.578 \times 10^{-4}$\\ 
&\\
\textbf{DSPKW-1C} &\bm{$8.225 \times 10^{-3} \pm 5.959 \times 10^{-5}$}\\
 \bottomrule

\end{tabular}
}
 \caption{NMSE values of one-simulation methods for the quadratic
 objective \eqref{eq:quadratic} without and with noise for 20000 simulations: standard deviation 
 of $100$ replications shown after $\pm$ symbol}
\label{tab:NMSE-quadratic-1sim}
\end{table}

%%%%%%%%%%%%%%%%%%%%% 1 simulation method results for fourth order
\begin{table}
\centering
\scalebox{0.96}{
\begin{tabular}{|c|c|}
\toprule
\rowcolor{gray!20}
\multicolumn{2}{||c|}{\multirow{2}{*}{\textbf{Noise parameter $\sigma=0$}}}\\[1em]

\midrule
\multirow{1}{*}{ \textbf{Method}} & \textbf{NMSE} \\
\midrule

\textbf{RDKW-1R} &$3.192 \times 10^{-1} \pm 1.991 \times 10^{-1}$ \\
&\\
\textbf{RDKW-1H} &$8.173 \times 10^{-2} \pm 1.255 \times 10^{-16}$ \\ 
&\\
\textbf{DSPKW-1C} &\bm{$4.403 \times 10^{-2} \pm 9.066 \times 10^{-17}$}\\

 \bottomrule

\rowcolor{gray!20}
\multicolumn{2}{||c|}{\multirow{2}{*}{\textbf{Noise parameter $\sigma=0.01$}}}\\[1em]
\midrule
\multirow{1}{*}{ \textbf{Method}} & \textbf{NMSE} \\
\midrule

\textbf{RDKW-1R} & $3.240 \times 10^{-1} \pm 1.836 \times 10^{-1}$ \\
&\\
\textbf{RDKW-1H} &$8.916 \times 10^{-2} \pm 1.896 \times 10^{-2}$\\ 
&\\
\textbf{DSPKW-1C} &\bm{$4.972 \times 10^{-2} \pm 9.812 \times 10^{-3}$}\\
 \bottomrule
\end{tabular}}
 \caption{NMSE values of one-simulation methods for the fourth order
 objective \eqref{eq:4thorder} without and with noise for 20000 simulations: standard deviation  
of of $100$ replications shown after $\pm$ symbol}
\label{tab:NMSE-fourthorder-1sim}
\end{table}

\subsection{Discussion of Results}
The performance comparisons of all the algorithms based on NMSE values are summarized in Tables \ref{tab:NMSE-quadratic}, \ref{tab:NMSE-fourthorder}, \ref{tab:NMSE-quadratic-1sim} and \ref{tab:NMSE-fourthorder-1sim}. In the tables, we have highlighted the algorithm that has the minimum NMSE.  We summarize our findings:
\begin{itemize}
\item Even in the absence of noise, due to the random directions chosen by RDKW-2R and RDKW-1R algorithms, the standard deviation is significantly high compared to the corresponding deterministic counterparts. 
\item We would like to emphasize that the quality of the solution (characterized by standard deviation) is significantly better for the case of proposed deterministic perturbations compared to the existing Hadamard based deterministic perturbations and random perturbations.  Note however that we do not make comparisons between two-simulation and one-simulation algorithms.  
\item In the case of two simulation algorithms (see Tables \ref{tab:NMSE-quadratic} and \ref{tab:NMSE-fourthorder}), DSPKW-2C performs marginally better than RDKW-2H, while both of them outperform RDKW-2R significantly.
\item In the case of one simulation algorithms (see Tables \ref{tab:NMSE-quadratic-1sim} and \ref{tab:NMSE-fourthorder-1sim}), DSPKW-1C performs better than both RDKW-1H and RDKW-1R.
\end{itemize}

\section{Conclusions}
\label{sec:conclusions}
We have generalized the deterministic perturbation sequences from lexicographical ordering and Hadamard matrix based constructions for the RDKW algorithm and presented a novel construction of deterministic perturbations that has least cycle length within the class of deterministic perturbation sequences. Further, we have proved the almost sure convergence of the iterates for the class of deterministic perturbation sequences. Now that we have a characterization of the class of deterministic perturbation 
sequences, it would be interesting as future work, to theoretically study and compare the rate of convergence of deterministic perturbation algorithms against their random perturbation counterparts. A challenging future direction would be to study the asymptotic normality or weak convergence of the iterates. It would also be interesting to similarly characterize the class of deterministic perturbation sequences for the SPSA algorithm.

\bibliography{reference}
\bibliographystyle{IEEEtran}
\end{document}